\newtheorem{theorem}{Theorem}
\newtheorem{remark}{Remark}
\theoremstyle{definition}
\newtheorem{definition}{Definition}
\theoremstyle{definition}
\newcommand\domain{\mathcal{X}}
\begin{document}
\title{Differentially Private Release of Public Transport Data: The Opal Use Case}

\author{Hassan Jameel Asghar, Paul Tyler and Mohamed Ali Kaafar\\\\
\small Data61, CSIRO, Australia\\
\small \{\texttt{hassan.asghar, paul.tyler, dali.kaafar}\}\texttt{@data61.csiro.au}
}

\date{\today}

\maketitle
\begin{abstract}
This document describes the application of a differentially private algorithm to release public transport usage data from Transport for New South Wales (TfNSW), Australia. The data consists of two separate weeks of ``tap-on/tap-off'' data of individuals who used any of the four different modes of public transport from TfNSW: buses, light rail, train and ferries. These taps are recorded through the smart ticketing system, known as Opal, available in the state of New South Wales, Australia.
\end{abstract}
\section{Introduction}
The Opal smart ticketing system\footnote{See \url{https://www.opal.com.au}.} has been introduced by the (state) Government of New South Wales (NSW) in Australia as a means to use NSW's public transport services. The Opal card can be used on different transport modes managed by the Transport for New South Wales (TfNSW), including trains, buses, light rail and ferries. A tap on using one of the Opal card readers at a location (train stations, buses, ferry wharfs and light rail stops) marks the beginning of a trip and a subsequent tap-off records the end of the trip. The data collected from the Opal card contains a rich set of information that could be used to gain valuable insights about transport usage across NSW. The data if released could bring potential benefits to commuters, businesses and others. However, if released in its raw form the data can have many privacy violations such as the ability to track someone's trip or series of trips. For this reason, TfNSW and Data61/CSIRO engaged in a project to release a privacy-preserving version of a sample of the Opal data as open data downloadable by anyone. We refer to this sample as the Opal dataset in the rest of this document.

The Opal dataset consists of rows each of which indicates a trip from a single individual. A row, among other attributes, contains a unique identifier for the Opal card, the trip date, transport mode (bus, train, etc.), tap-on time, tap-off time, tap-on location and tap-off location. The tap-on and off times have the granularity of one minute. The tap-on and off locations are the precise locations of train stations, bus stops, ferry wharfs and light rail stops. The Opal dataset consists of trips from two separate one week periods; a total of 14 days of data. To release Opal data in a privacy-preserving manner, we used the notion of differential privacy~\cite{calib-noise}. Differential privacy gives a rigorous definition of privacy that is tied to the process purported to provide privacy. Informally, the definition states that a process (an algorithm) is differentially private if the output is indistinguishable if any single individual's data is removed from the dataset. The notion of indistinguishability is probabilistic in nature and is quantified with the help of a parameter denoted $\epsilon$, often called the privacy budget. We prefer the use of differential privacy over ad hoc ways of de-identifying datasets, since the latter is a practice that can only check the (in)effectiveness of known attacks, and could be susceptible to new attacks or future sources of background knowledge available to the attacker\footnote{We define an attacker as anyone who has access to the output data, which also covers accidental re-identification of individuals in the released dataset(s).} about individuals in the dataset. In contrast, differential privacy is immune to new attacks\footnote{This guarantee follows from the definition of differential privacy. Some attacks might still be applicable either due to the improper implementation of differential privacy~\cite{holes} or due to threat models that are not captured within the definitional framework~\cite{diff-priv-fire}. For instance, the attack mentioned in the latter citation exploits the fact that there could be time differences in answering different (online) queries depending on the dataset, thus releasing some side channel information. This is not captured by the definition of differential privacy (the analyst only observes answers to the queries and not other effects of the queries such as the time taken to answer the query). Both attack categories have their counterparts in the field of cryptography as well.} or any future sources of knowledge. The released Opal dataset is available for download at~\cite{opal-data}.

In what follows we first give a brief background on differential privacy in Section~\ref{sec:background}, slightly tailored to the specific case of the Opal dataset. In Section~\ref{sec:sbh}, we describe the differentially private algorithm used to release the Opal dataset, its privacy and utility guarantees and why it was chosen as the algorithm suitable for the Opal dataset. In Section~\ref{sec:app-on-opal}, we describe how the dataset was processed and released, the differentially private nature of the whole process, and the privacy parameters used to release the dataset.  We conclude in Section~\ref{sec:conc}.
\section{Background}
\label{sec:background}
We assume a domain, i.e., data universe, $\domain$. An instance $x \in \domain$ shall be called a \emph{point}. A dataset $D$ is an $n$-tuple, each row of which is a point $x \in \domain$.\footnote{The distinction between a point and a row is important. A dataset may contain multiple rows that are equal to the same point $x \in \domain$.} We denote this as $D \in \domain^n$. In this document, we assume each row of $D$ to represent a trip from an individual. The attributes of the domain $\domain$ include, among other attributes, tap-on times and locations, tap-off times and locations, date of the trip and transport mode. Notice that there is no distinction between sensitive or non-sensitive attributes. Thus, the domain may include potentially identifying attributes such as the unique ID of an individual making the trip. We denote a row of $D$ by small case letters, e.g., $x$ and $y$. For a row $x$ the value of the $i$th attribute is denoted $x_i$. We say two $n$-row datasets $D$ and $D'$ are neighbours, denoted $D \sim D'$ if they differ in only one row. That is, both datasets contain $n$ rows, $n - 1$ of which are the same in the two datasets.

\begin{definition}[Differential privacy~\cite{calib-noise, dp-book}]
A randomized algorithm (mechanism) $\mathcal{M}: \domain^n \rightarrow R$ is $(\epsilon, \delta)$-differentially private if for every $S \subseteq R$, and for all neighbouring databases $D, D' \in \domain^n$, the following holds \[ \mathbb{P} (\mathcal{M}(D) \in S ) \le e^{\epsilon} \mathbb{P} (\mathcal{M}(D') \in S) + \delta. \] If $\delta = 0$, then $\mathcal{M}$ is $\epsilon$-differentially private.
\end{definition}

The parameter $\delta$ is required to be a negligible function of $n$ \cite[\S 2.3, p. 18]{dp-book},\cite[\S 1.6, p. 9]{salil-tut}.\footnote{A function $f$ in $n$ is negligible, if for all $c \in \mathbb{N}$, there exists an $n_0 \in \mathbb{N}$ such that for all $n \ge n_0$, it holds that $f(n) < n^{-c}$.} The parameter $\epsilon$ on the other hand should be small but not arbitrarily small. We may think of $\epsilon \le 0.01$, $\epsilon \le 0.1$ or $\epsilon \le 1$~\cite[\S 1]{perplexed}, \cite[\S 3.5.2, p. 52]{dp-book}. When $\delta = 0$, the resulting notion is sometimes referred to as \emph{pure} differential privacy in contrast to a non-zero $\delta$, which is called \emph{approximate} differential privacy. In this document, we shall use differential privacy as an umbrella term for both pure and approximate differential privacy. Where a distinction is required, we shall specifically indicate as such.

\begin{remark}[Trip privacy]
\label{rem:trip}
Since each row of the dataset $D$ represents a trip, our notion of privacy is related to \emph{trip privacy}. That is, in the above definition of differential privacy two neighbouring datasets differ in one trip. An individual may have multiple trips in the dataset. Thus, in this sense we ensure that the addition or removal of a single trip of an individual does not change the output of the differentially private algorithm beyond what is allowed by the privacy parameter $\epsilon$. This is akin to providing edge privacy as opposed to node privacy in the context of graph data~\cite[\S 1]{graph-privacy}; the latter being a stronger notion of privacy. Later on we discuss why we chose trip privacy as our preferred notion. In more detail, we in fact first partition the Opal dataset according to date and transport modes, so that for each transport mode and date pair, we obtain a separate partition (See Section~\ref{sec:app-on-opal}). This then amounts to providing privacy of a trip of an individual on a particular date and transport mode. 
\end{remark}

The following definition is useful in interpreting $(\epsilon, \delta)$-differential privacy~\cite[\S 1.6]{salil-tut}. 

\begin{definition}
Two random variables $Y$ and $Y'$ taking on values in $R$ are called $(\epsilon, \delta)$-indistinguishable if for every $S \subseteq R$, the following holds 
\[
\mathbb{P} (Y \in S ) \le e^{\epsilon} \mathbb{P} (Y' \in S) + \delta.
\]
\end{definition}

With the above definition, we can interpret $(\epsilon, \delta)$-differential privacy as $\epsilon$-differential privacy with probability at least $1 - \delta$ in light of the following theorem~\cite{conc-dp-bun}, \cite{salil-tut}.
\begin{theorem}
Two random variables $Y$ and $Y'$ are  $(\epsilon, \delta)$-indistinguishable if and only if there are events $\mathsf{E}$ (in the sample space of $Y$) and $\mathsf{E}'$ (in the sample space of $Y'$) such that 
\begin{itemize}
 \item $\mathbb{P} ( \mathsf{E}), \mathbb{P} ( \mathsf{E}') \ge 1 - \delta$, and
 \item $Y$ given $\mathsf{E}$ and $Y'$ given $E'$ are $(\epsilon, 0)$-indistinguishable.
\end{itemize} \qed
\end{theorem}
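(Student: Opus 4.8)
The plan is to prove the two implications of the equivalence separately, working throughout with densities. Assume without loss of generality that $Y$ and $Y'$ admit densities $p$ and $q$ with respect to a common dominating measure $\mu$ on $R$ (counting measure when $R$ is discrete, as in our setting). Two observations organise the argument. First, $(\epsilon,\delta)$-indistinguishability of $Y$ and $Y'$ is equivalent to $\int (p - e^{\epsilon} q)_{+}\, d\mu \le \delta$, since the supremum of $\mathbb{P}(Y \in S) - e^{\epsilon}\mathbb{P}(Y' \in S) = \int_{S}(p - e^{\epsilon} q)\, d\mu$ over $S \subseteq R$ is attained on $S = \{r : p(r) > e^{\epsilon} q(r)\}$. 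Second, conditioning $Y$ on an event $\mathsf{E}$ amounts to reweighting its density by some measurable $a : R \to [0,1]$, namely the acceptance probability $a(r) = \mathbb{P}(\mathsf{E} \mid Y = r)$, so that the conditional density is $a p / \mathbb{P}(\mathsf{E})$ with $\mathbb{P}(\mathsf{E}) = \int a p\, d\mu$; consequently a probability density $\tilde p$ is realisable as the law of $Y \mid \mathsf{E}$ for some event with $\mathbb{P}(\mathsf{E}) \ge 1-\delta$ if and only if $\tilde p \le p/(1-\delta)$ pointwise. I will use this density picture to avoid having to match the two (generally unequal) event probabilities.

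For \emph{sufficiency} (existence of the events implies $(\epsilon,\delta)$-indistinguishability), I would expand $\mathbb{P}(Y \in S)$ by the law of total probability as $\mathbb{P}(Y \in S, \mathsf{E}) + \mathbb{P}(Y \in S, \mathsf{E}^{c})$, bound the second term by $\mathbb{P}(\mathsf{E}^{c}) \le \delta$, and rewrite the first as $\mathbb{P}(\mathsf{E})\,\mathbb{P}(Y \in S \mid \mathsf{E})$. The $(\epsilon,0)$-bound of the conditionals gives $\mathbb{P}(Y \in S \mid \mathsf{E}) \le e^{\epsilon}\,\mathbb{P}(Y' \in S \mid \mathsf{E}')$, and it remains to absorb the factors so that the product is at most $e^{\epsilon}\mathbb{P}(Y' \in S)$. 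This is immediate when $\mathbb{P}(\mathsf{E}) \le \mathbb{P}(\mathsf{E}')$, because then $\mathbb{P}(\mathsf{E})\,\mathbb{P}(Y' \in S \mid \mathsf{E}') \le \mathbb{P}(\mathsf{E}')\,\mathbb{P}(Y' \in S \mid \mathsf{E}') = \mathbb{P}(Y' \in S, \mathsf{E}') \le \mathbb{P}(Y' \in S)$; the opposite case $\mathbb{P}(\mathsf{E}) > \mathbb{P}(\mathsf{E}')$ is settled by a short elementary estimate that uses $\mathbb{P}(Y' \in S \mid \mathsf{E}') \le 1$ together with $\mathbb{P}(\mathsf{E}),\mathbb{P}(\mathsf{E}') \ge 1-\delta$.

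For \emph{necessity}, I would construct the events explicitly. Define $\mathsf{E}$ on the side of $Y$ by accepting a draw $Y = r$ with probability $a(r) = \min\{1,\ e^{\epsilon} q(r)/p(r)\}$, i.e.\ trimming the density exactly where it exceeds $e^{\epsilon} q$. Then $\mathbb{P}(\mathsf{E}) = \int \min(p, e^{\epsilon} q)\, d\mu = 1 - \int (p - e^{\epsilon} q)_{+}\, d\mu \ge 1 - \delta$ by the first observation, and the conditional density is $\tilde p = \min(p, e^{\epsilon} q)/\mathbb{P}(\mathsf{E}) \le e^{\epsilon} q/\mathbb{P}(\mathsf{E})$. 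I would then choose $\mathsf{E}'$ so that the conditional density $\tilde q$ of $Y' \mid \mathsf{E}'$ lies above the ``floor'' $\tilde p/e^{\epsilon}$; any such choice yields $\tilde p \le e^{\epsilon}\tilde q$, which is exactly the required $(\epsilon,0)$-indistinguishability of the conditionals.

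The step I expect to be the crux is the construction of $\mathsf{E}'$, i.e.\ showing that a probability density $\tilde q$ can be fitted between the floor $\tilde p/e^{\epsilon}$ and the ceiling $q/(1-\delta)$ that a valid ``$\ge 1-\delta$'' conditioning of $Y'$ permits. Two inequalities make this possible: the floor lies pointwise below the ceiling, since $\tilde p/e^{\epsilon} \le q/\mathbb{P}(\mathsf{E}) \le q/(1-\delta)$; and the masses are compatible, since $\int (\tilde p/e^{\epsilon})\, d\mu = e^{-\epsilon} \le 1 \le 1/(1-\delta) = \int q/(1-\delta)\, d\mu$. Hence one may start from the floor and add mass wherever room remains under the ceiling until the integral reaches $1$, producing a valid $\tilde q$; realising it as an event is the reweighting correspondence of the first paragraph (here acceptance probability $(1-\delta)\tilde q(r)/q(r) \le 1$, giving $\mathbb{P}(\mathsf{E}') = 1-\delta$). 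The reason this is delicate — and why I work with densities rather than conditioning both variables symmetrically — is that the naive symmetric trimming, capping $p$ at $e^{\epsilon} q$ and $q$ at $e^{\epsilon} p$, produces two conditionals whose normalising constants generally differ, which breaks the pointwise ratio bound; routing the argument through the floor/ceiling feasibility check sidesteps this mismatch.
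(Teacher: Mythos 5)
The paper itself gives no proof of this theorem --- it is stated with a \qed{} and a pointer to the literature --- so your attempt can only be measured against the standard argument (Vadhan's tutorial, Lemma~1.5). Your \emph{necessity} direction is that standard argument and is correct: trimming $p$ at $e^{\epsilon}q$ gives $\mathbb{P}(\mathsf{E}) = 1 - \int (p-e^{\epsilon}q)_{+}\,d\mu \ge 1-\delta$, and your floor/ceiling feasibility check ($\tilde p/e^{\epsilon} \le q/(1-\delta)$ pointwise, with total masses $e^{-\epsilon} \le 1 \le 1/(1-\delta)$) correctly produces a valid $\tilde q$ realisable by an event of probability exactly $1-\delta$. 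One caveat: this yields only the one-sided bound $\tilde p \le e^{\epsilon}\tilde q$, which matches the paper's (literally one-sided) definition of indistinguishability, but would not suffice for the symmetric version of the lemma that the cited sources actually state.

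The genuine gap is in the \emph{sufficiency} direction, in the case $\mathbb{P}(\mathsf{E}) > \mathbb{P}(\mathsf{E}')$. The ingredients you name do not close it: writing $\mathbb{P}(\mathsf{E})\,\mathbb{P}(Y'\in S\mid \mathsf{E}') \le \mathbb{P}(\mathsf{E}')\,\mathbb{P}(Y'\in S\mid \mathsf{E}') + (\mathbb{P}(\mathsf{E})-\mathbb{P}(\mathsf{E}'))\cdot 1 \le \mathbb{P}(Y'\in S) + \delta$ and multiplying back by $e^{\epsilon}$ leaves you with $\mathbb{P}(Y\in S) \le e^{\epsilon}\mathbb{P}(Y'\in S) + (1+e^{\epsilon})\delta$, i.e.\ $(\epsilon,(1+e^{\epsilon})\delta)$-indistinguishability rather than $(\epsilon,\delta)$. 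The quantity to bound by $1$ is not $\mathbb{P}(Y'\in S\mid\mathsf{E}')$ but $\mathbb{P}(Y\in S\mid\mathsf{E})$, and doing so removes the case split entirely: since $1-\mathbb{P}(Y\in S\mid\mathsf{E}) \ge 0$ and $\mathbb{P}(\mathsf{E}) \ge 1-\delta$,
\[
\mathbb{P}(Y\in S) \le 1 - \mathbb{P}(\mathsf{E})\bigl(1-\mathbb{P}(Y\in S\mid\mathsf{E})\bigr) \le \delta + (1-\delta)\,\mathbb{P}(Y\in S\mid\mathsf{E}) \le \delta + e^{\epsilon}(1-\delta)\,\mathbb{P}(Y'\in S\mid\mathsf{E}'),
\]
and then $(1-\delta)\,\mathbb{P}(Y'\in S\mid\mathsf{E}') \le \mathbb{P}(\mathsf{E}')\,\mathbb{P}(Y'\in S\mid\mathsf{E}') = \mathbb{P}(Y'\in S,\mathsf{E}') \le \mathbb{P}(Y'\in S)$ finishes it. With this substitution your proof is complete.
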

An important property of differential privacy is that it composes~\cite{dp-book}.
\begin{theorem}[Basic composition]
\label{the:basic-comp}
If $\mathcal{M}_1, \ldots \mathcal{M}_k$ are each $(\epsilon, \delta)$-differentially private then $\mathcal{M} = ( \mathcal{M}_1, \ldots, \mathcal{M}_k)$ is $(k\epsilon, k\delta)$-differentially private. \qed
\end{theorem}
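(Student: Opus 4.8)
The plan is to leverage the equivalent characterisation of $(\epsilon, \delta)$-indistinguishability stated in the immediately preceding theorem, which lets us reduce the approximate-DP composition to the simpler pure-DP case together with a union bound over low-probability failure events. Throughout I would assume, as is standard for this composition statement, that $\mathcal{M}_1, \ldots, \mathcal{M}_k$ use independent internal randomness, so that conditioned on the input their outputs are independent; the output space of $\mathcal{M}$ is then the product $R = R_1 \times \cdots \times R_k$.

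First I would dispose of the pure case $\delta = 0$, since it contains the core idea. Fix neighbours $D \sim D'$ and a point $r = (r_1, \ldots, r_k) \in R$. By independence the joint density factorises, so the likelihood ratio equals $\prod_{i=1}^k \mathbb{P}(\mathcal{M}_i(D) = r_i) / \mathbb{P}(\mathcal{M}_i(D') = r_i)$. Each factor is at most $e^\epsilon$ by the $\epsilon$-DP of $\mathcal{M}_i$, so the product is at most $e^{k\epsilon}$; summing (integrating) over any $S \subseteq R$ then gives $\mathbb{P}(\mathcal{M}(D) \in S) \le e^{k\epsilon} \mathbb{P}(\mathcal{M}(D') \in S)$, i.e. $k\epsilon$-DP.

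For the general case I would proceed as follows. Fix $D \sim D'$ and write $Y_i = \mathcal{M}_i(D)$ and $Y_i' = \mathcal{M}_i(D')$. By $(\epsilon,\delta)$-DP each pair $(Y_i, Y_i')$ is $(\epsilon,\delta)$-indistinguishable, so the preceding theorem supplies events $\mathsf{E}_i, \mathsf{E}_i'$, each of probability at least $1 - \delta$, on which $Y_i \mid \mathsf{E}_i$ and $Y_i' \mid \mathsf{E}_i'$ are $(\epsilon,0)$-indistinguishable. Set $\mathsf{E} = \bigcap_i \mathsf{E}_i$ and $\mathsf{E}' = \bigcap_i \mathsf{E}_i'$; by the union bound $\mathbb{P}(\mathsf{E}), \mathbb{P}(\mathsf{E}') \ge 1 - k\delta$. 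Using independence, the conditional law of $Y = (Y_1, \ldots, Y_k)$ given $\mathsf{E}$ factorises as the product of the laws $Y_i \mid \mathsf{E}_i$, and likewise for $Y' \mid \mathsf{E}'$, so the pure-case likelihood-ratio argument shows $Y \mid \mathsf{E}$ and $Y' \mid \mathsf{E}'$ are $(k\epsilon, 0)$-indistinguishable. Applying the reverse direction of the preceding theorem to these two high-probability events then yields that $Y$ and $Y'$ are $(k\epsilon, k\delta)$-indistinguishable, which is exactly $(k\epsilon, k\delta)$-DP of $\mathcal{M}$.

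The main obstacle is the conditioning step: I must verify that the events $\mathsf{E}_i$, which live in the individual sample spaces, combine correctly under the product measure so that the conditional joint distribution genuinely factorises, and that the union bound is applied to the complements $\mathsf{E}_i^c$ rather than to the events themselves. In the independent-randomness model each $Y_i$ is a function of its own randomness source, so $\mathsf{E}_i$ can be taken as a coordinate event, which is what makes the factorisation legitimate; this is the point I would write out most carefully. A secondary care point is that the statement treats $\mathcal{M}$ as the non-adaptive tuple, so no output feeds into a later mechanism; were one to want the adaptive version, the likelihood ratio would instead unroll as a telescoping product of conditional probabilities, but the same union-bound bookkeeping on the $\delta$ terms would carry through unchanged.
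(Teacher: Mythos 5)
The paper gives no proof of this theorem --- it is stated without argument and attributed to the literature --- so there is nothing in-paper to compare against; your argument is the standard proof from the cited sources, reducing approximate-DP composition to the pure case via the $(\epsilon,\delta)$-indistinguishability characterization stated immediately before the theorem, intersecting the per-mechanism good events $\mathsf{E}_i$, and union-bounding the failure probabilities. It is correct as written, and the one genuinely delicate point --- that each $\mathsf{E}_i$ must be measurable with respect to the independent internal randomness of $\mathcal{M}_i$ alone so that the conditional joint law factorizes into the conditional marginals --- is one you have already identified and handled.
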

The above is sometimes referred to as (basic) \emph{sequential} composition as opposed to parallel composition defined next.
\begin{theorem}[Parallel composition~\cite{mcsherry-par-comp}]
\label{the:par-comp}
Let $\mathcal{M}_i$ each provide $(\epsilon, \delta)$-differential privacy. Let $\domain_i$ be arbitrary disjoint subsets of the domain $\domain$. The sequence of $\mathcal{M}_i(D \cap \domain_i)$ provides  $(\epsilon, \delta)$-differential privacy, where $D \in \domain^n$.\qed
\end{theorem}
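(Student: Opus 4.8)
The plan is to exploit the fact that, because the $\domain_i$ are disjoint, changing a single row of $D$ can affect at most one of the restricted datasets $D \cap \domain_i$ while leaving all the others untouched. This is precisely what distinguishes parallel from sequential (Theorem~\ref{the:basic-comp}) composition, and is the reason the privacy parameters do not degrade by a factor of $k$.

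First I would fix notation for the output. Write $R = R_1 \times \cdots \times R_k$, where $R_i$ denotes the range of $\mathcal{M}_i$, and assume the mechanisms use independent internal randomness, so that the components of $\mathcal{M}(D) = (\mathcal{M}_1(D \cap \domain_1), \ldots, \mathcal{M}_k(D \cap \domain_k))$ are mutually independent. Fix neighbours $D \sim D'$ and let $x$ be the single row in which they differ. Since the $\domain_i$ are disjoint, $x$ lies in at most one of them, say $\domain_{i^*}$ (if it lies in none, then $\mathcal{M}(D)$ and $\mathcal{M}(D')$ are identically distributed and there is nothing to prove). Consequently $D \cap \domain_i = D' \cap \domain_i$ for every $i \ne i^*$, while $D \cap \domain_{i^*}$ and $D' \cap \domain_{i^*}$ are themselves neighbours, so the $(\epsilon, \delta)$-differential privacy of $\mathcal{M}_{i^*}$ applies to them.

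The substantive step is to pass from this per-coordinate guarantee to a guarantee for an arbitrary event $S \subseteq R$, which need not be a product (rectangle) of coordinate events. Here I would condition on the outputs of the unaffected mechanisms. Let $Z = (\mathcal{M}_i(D \cap \domain_i))_{i \ne i^*}$; by the observations above, $Z$ has the same distribution whether we start from $D$ or $D'$, and it is independent of the $i^*$-th coordinate. For a fixed value $z$ of $Z$, write $S_z = \{ w \in R_{i^*} : (w, z) \in S \}$ for the corresponding slice of $S$. By independence and the law of total probability,
\[
\mathbb{P}(\mathcal{M}(D) \in S) = \mathbb{E}_{Z}\big[\, \mathbb{P}(\mathcal{M}_{i^*}(D \cap \domain_{i^*}) \in S_Z) \,\big],
\]
and the same identity holds for $D'$ with the same outer distribution over $Z$. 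Applying the $(\epsilon,\delta)$-differential privacy of $\mathcal{M}_{i^*}$ to each slice $S_z$ gives $\mathbb{P}(\mathcal{M}_{i^*}(D \cap \domain_{i^*}) \in S_z) \le e^{\epsilon}\, \mathbb{P}(\mathcal{M}_{i^*}(D' \cap \domain_{i^*}) \in S_z) + \delta$ pointwise in $z$; taking expectation over $Z$ and using linearity then yields the claimed inequality $\mathbb{P}(\mathcal{M}(D) \in S) \le e^{\epsilon}\,\mathbb{P}(\mathcal{M}(D') \in S) + \delta$.

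I expect the main obstacle to be this last reduction to a general $S$: the hypothesis on each $\mathcal{M}_i$ only controls probabilities of events in its own range $R_i$, so some care (the conditioning/slicing argument above, justified by independence and Fubini's theorem) is needed to rule out correlations across coordinates inflating the bound. A secondary point worth checking is the alignment of the neighbouring relation with the partition: the argument is cleanest when the single differing row stays within one partition $\domain_{i^*}$, as assumed above; if one instead allows a replacement whose two rows fall in different partitions, then two restricted datasets change and one must either track both or argue that the definition of neighbours confines the change to a single $\domain_{i^*}$.
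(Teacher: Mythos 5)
The paper does not actually prove this theorem---it is stated as a cited result from McSherry with no argument given---so there is no in-paper proof to compare against; your proposal supplies the standard one. The core of your argument is sound: disjointness of the $\domain_i$ means a single differing row touches at most one restricted dataset, the unaffected coordinates $Z$ are identically distributed under $D$ and $D'$, and the conditioning/slicing step (taking the expectation over $Z$ of the per-slice inequality, with the constant $\delta$ integrating to $\delta$) correctly extends the per-coordinate hypothesis to arbitrary, non-rectangular events $S \subseteq R$. The one genuine issue is the caveat you raise in your final sentence, and it deserves more than a parenthetical: this paper defines neighbours by \emph{replacement} (both datasets have $n$ rows and differ in exactly one), so when the replaced row $x$ and its replacement $x'$ fall in different parts $\domain_{i^*} \neq \domain_{j^*}$, two restricted datasets change---one loses a row and one gains a row---and the resulting pairs are not neighbours under the paper's own definition (they do not even have equal size, so the hypothesis that each $\mathcal{M}_i$ is $(\epsilon,\delta)$-differentially private on fixed-size inputs does not literally apply to them). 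Resolving this requires either adopting the add/remove notion of neighbours, as in McSherry's original formulation, under which your argument is complete as written, or accepting a degradation to roughly $(2\epsilon, 2\delta)$ by composing over the two affected coordinates. This is a looseness in the theorem statement as transcribed rather than a flaw in your strategy, but a complete write-up should say explicitly which neighbouring relation is in force and handle (or exclude) the cross-partition replacement case.
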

Note that partitioning the domain $\domain$ into disjoint sets means partitioning a dataset $D$ into disjoint sets of rows and \emph{not} disjoint sets of columns. Indeed partitioning into disjoint sets of rows means that addition and removal of a row changes only one of the disjoint sets. 

A query is defined as a function $q: \domain^n \rightarrow \mathbb{R}$. 
\begin{definition}[Counting queries and point functions]
A \emph{counting} query is specified by a predicate $q: \domain \rightarrow \{0, 1\}$ and extended to datasets $D \in \domain^n$ by summing up the predicate on all $n$ rows of the dataset as
\[
q(D) = \sum_{x \in D} q(x).
\]
A point function~\cite{salil-tut} is the sum of the predicate $q_y: \domain \rightarrow \{0, 1\}$, which evaluates to $1$ if the row is equal to the point $y \in \domain$ and $0$ otherwise, over the dataset $D$. Note that computing all point functions, i.e., answering the query $q_y(D)$ for all $y \in \domain$, amounts to computing the histogram of the dataset $D$.
\end{definition}

\begin{definition}[Global sensitivity~\cite{dp-book, salil-tut}]
The global sensitivity of a counting query $q: \domain^n \rightarrow \mathbb{N}$ is 
\[ 
\Delta q = \max_{\substack{D, D' \in \domain^n \\ D \sim D'}} \lVert q(D) - q(D') \rVert_{1}, 
\] 
where $\lVert \cdot \rVert_{1}$ is the $l_1$-norm.\footnote{For a vector $\mathbf{x}$, the $l_1$-norm $\lVert \mathbf{x} \rVert_1$ is defined as $\sum_i | x_i |$, where $x_i$  is the $i$th element of $\mathbf{x}$ and $i$ ranges over all elements of $\mathbf{x}$.} \qed
\end{definition}
It is easy to see that the global sensitivity of a counting query $q: \domain^n \rightarrow \mathbb{N}$ is $1$.

\begin{definition}[Local sensitivity~\cite{dp-book, salil-tut}]
The local sensitivity of a counting query $q: \domain^n \rightarrow \mathbb{N}$ at the dataset $D \in \domain^n$ is 
\[ 
\Delta q = \max_{\substack{ D \sim D'}} \lVert q(D) - q(D') \rVert_{1}.
\]
\end{definition}
The difference between local and global sensitivity is that whereas the latter is defined over all possible neighbouring datasets $D, D'$, the former is only evaluated at neighbours of a \emph{fixed} dataset $D$, i.e., the input dataset. A key property of differential privacy is that it is immune to post-processing. That means that a released dataset from a differentially private algorithm cannot be further processed to reduce privacy (without access to the original dataset). 

\begin{theorem}[Post-processing~\cite{dp-book}]
\label{the:post-proc}
If $\mathcal{M} : \domain^n \rightarrow R$ is $(\epsilon, \delta)$-differentially private and $f : R \rightarrow R'$ is any randomized function, then $f \circ \mathcal{M} : \domain^n \rightarrow R'$ is $(\epsilon, \delta)$-differentially private.\qed
\end{theorem}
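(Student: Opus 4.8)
The plan is to reduce the randomized case to the deterministic one, and to handle the deterministic case by pulling events back through $f$ and invoking the differential privacy of $\mathcal{M}$ on the pulled-back event.

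First I would treat the case where $f : R \to R'$ is deterministic. Fix neighbouring datasets $D \sim D'$ and any (measurable) event $S' \subseteq R'$. The key observation is that the preimage $S = f^{-1}(S') = \{ r \in R : f(r) \in S' \} \subseteq R$ satisfies $\{ f(\mathcal{M}(D)) \in S' \} = \{ \mathcal{M}(D) \in S \}$ as events, so that $\mathbb{P}(f(\mathcal{M}(D)) \in S') = \mathbb{P}(\mathcal{M}(D) \in S)$, and likewise with $D'$ in place of $D$. Applying the $(\epsilon, \delta)$-differential privacy of $\mathcal{M}$ to the event $S$ and then rewriting both sides back in terms of $f$ gives the desired inequality immediately.

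Second, for a genuinely randomized $f$, I would write it as a deterministic function of its argument together with an independent source of randomness: $f(r) = g(r, \omega)$, where $\omega$ is drawn according to some distribution independent of the coins of $\mathcal{M}$ and $g$ is deterministic. For each fixed outcome $\omega$, the map $g(\cdot, \omega)$ is deterministic, so the first step yields
\[
\mathbb{P}\bigl( g(\mathcal{M}(D), \omega) \in S' \bigr) \le e^{\epsilon}\, \mathbb{P}\bigl( g(\mathcal{M}(D'), \omega) \in S' \bigr) + \delta .
\]
I would then average over $\omega$: since the randomness of $f$ is independent of that of $\mathcal{M}$, the unconditional probability $\mathbb{P}(f(\mathcal{M}(D)) \in S')$ is exactly the expectation over $\omega$ of the left-hand side, and similarly for $D'$ on the right. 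Because the inequality holds pointwise in $\omega$ with the \emph{same} constants $e^{\epsilon}$ and $\delta$, and expectation is monotone and linear while $\delta$ is constant, taking expectations preserves it and gives $\mathbb{P}(f(\mathcal{M}(D)) \in S') \le e^{\epsilon} \mathbb{P}(f(\mathcal{M}(D')) \in S') + \delta$, which is precisely $(\epsilon, \delta)$-differential privacy of $f \circ \mathcal{M}$.

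The calculations here are essentially trivial; the only real content, and the two steps I would be most careful about, are structural. First, every event about the output of $f \circ \mathcal{M}$ must pull back to an event about the output of $\mathcal{M}$ alone, which requires $f^{-1}(S')$ to be measurable (guaranteed for measurable $f$). Second, the decomposition of a randomized $f$ into a deterministic map plus independent randomness relies crucially on the fact that $f$'s internal coins do not depend on the input dataset; this independence is exactly what lets me move the averaging over $\omega$ through the privacy inequality without disturbing the constants. If $f$ were instead permitted to read the original dataset $D$, the claim would fail, so I would flag this assumption explicitly as the place where post-processing draws its power.
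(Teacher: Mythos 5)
Your proof is correct; the paper itself gives no proof of this theorem (it is stated with a reference to the differential privacy textbook), and your argument --- pulling back the event through $f^{-1}$ for deterministic $f$, then decomposing a randomized $f$ into a deterministic map with independent coins and averaging --- is exactly the standard proof in that cited source. Your closing caveat that $f$ must not read the original dataset is a worthwhile observation and correctly identifies where the argument would break.
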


The Laplace mechanism is employed as a building block in the main algorithm we use to generate the (privacy-preserving) Opal Dataset.
\begin{definition}[Laplace mechanism~\cite{calib-noise}]
\label{def:laplace}
The Laplace distribution with mean $0$ and scale $b$ has the probability density function 
\[ 
\text{Lap}(x \mid b) = \frac{1}{2b}e^{-\frac{|x|}{b}}.
\]
We shall remove the argument $x$, and simply denote the above by $\text{Lap}(b)$. Let $q: \domain^n \rightarrow \mathbb{R}$ be a query. The mechanism \[ \mathcal{M}_{\text{Lap}}(q, D, \epsilon) = q(D) + \text{Lap}\left(\frac{\Delta q}{\epsilon}\right) \] is known as the Laplace mechanism.\footnote{Recall that $\Delta q$ denotes the sensitivity of the query $q$ (global or local).} The Laplace mechanism is $\epsilon$-differentially private~\cite[\S 3.3]{dp-book}. Furthermore, with probability at least $1 - \beta$~\cite[\S 3.3]{dp-book} \[ \max_{q \in Q} | q(D) - \mathcal{M}_{\text{Lap}}(q, D, \epsilon) | \le \frac{\Delta q}{\epsilon} \ln \left(\frac{1}{\beta} \right) \doteq \alpha, \] where $\beta \in (0, 1]$. If $q$ is a counting query, then $\Delta q = 1$, and the error $\alpha$ is \[ \alpha = \frac{1}{\epsilon} \ln \left(\frac{1}{\beta} \right) \] with probability at least $1 - \beta$. \qed
\end{definition}

From the CDF of the Laplace distribution, we see that for any $t \ge 0$, 
\begin{equation}
\label{eq:lap-bound}
\mathbb{P} \left( \text{Lap}(b) \ge tb \right) = \frac{1}{2} \exp (-t) \le \exp (-t)
\end{equation}

The following definition relates to the output distribution of point functions on a dataset $D$.
\begin{definition}
\label{def:dense}
We say that a dataset $D$ is $(k, \gamma)$-dense if a $\gamma$ fraction of points $x \in D$ have $q_x(D) \ge k$. A dataset is said to be dense, if the values of $k$ and $\gamma$ are large, and sparse otherwise.
\end{definition}
Note that we have deliberately not made the second part of the definition precise. This is because we shall only use the above definition to informally interpret the output of the algorithm (described next), for which the relative difference between dense and sparse datasets suffices.

\section{The Stability-based Histogram Algorithm}
\label{sec:sbh}
In this section we describe the algorithm we used to generate differentially private Opal data. The algorithm, which appears in~\cite[\S 7.1]{salil-tut},\cite{balcer}, slightly modifies the online version of the algorithm proposed in~\cite{bun-stable}, and generates a differentially private synthetic dataset (approximately) answering all point functions $q_y$, where $y \in \domain$. That is, the algorithm can be used to approximate the histogram of a dataset $D$. We represent the algorithm in Algorithm~\ref{alg:algo-diff-hist}, and then discuss the intuition behind the algorithm followed by its privacy and utility guarantees. We call this algorithm the stability-based histogram algorithm, following~\cite[\S 3.3]{salil-tut}. Shortening further, we shall refer to this algorithm as the SBH algorithm from now onwards.

\begin{algorithm}[h]
\caption{\texttt{Stability-based Histogram}~\cite{bun-stable, salil-tut, balcer}}
\label{alg:algo-diff-hist}
\SetAlgoLined
\SetCommentSty{mycommfont}
\SetAlCapSkip{1em}
\DontPrintSemicolon{}
\SetKwInOut{Input}{Input}
\let\oldnl\nl
\newcommand{\nonl}{\renewcommand{\nl}{\let\nl\oldnl}}
\Input{Domain $\domain$, dataset $D \in \domain^n$, parameters $\epsilon$ and $\delta$.}
Initialize $D_{\mathsf{out}} \leftarrow \emptyset$.\;
\For{each point $x \in \domain$}{
	\textbf{if} $q_x(D) = 0$ \textbf{then} set $a_x = 0$. \;
	\If{$q_x(D) > 0$}{Set $a_x \leftarrow q_x(D) + \text{Lap} (\frac{2}{\epsilon})$.\;
		\If{$a_x < 2  \ln\left( \frac{2}{\delta} \right)/\epsilon + 1$ }{
		Set $a_x \leftarrow 0$.\;
		}
		\Else{
		Set $a_x \leftarrow \text{round}(a_x)$.\;
		}
	}
	\textbf{if} $a_x > 0$ \textbf{then} append $a_x$ copies of $x$ to $D_{\mathsf{out}}$.
}
Output $D_{\mathsf{out}}$.
\end{algorithm}

\subsection{Some Remarks about the SBH Algorithm}
The algorithm leverages on the answers to the point functions being stable in the neighbourhood of the dataset $D$. That is if the local sensitivity of the query does not change after removing a few rows from $D$, we can consider the answers to those queries as being stable. The ``few'' in the previous sentence is characterized precisely via the threshold set in step 6 of the SBH algorithm. We shall refer to the right hand term in the inequality as the threshold. If the dataset is dense\footnote{Recall our definition of dense and sparse datasets (Definition~\ref{def:dense}).} then the answers to most point functions on points in the dataset are expected to be above the threshold, and hence will be present in the output (after perturbation via the Laplace mechanism). However, this might not hold in general for all datasets. Consider for instance a dataset $D$ each row of which contains a unique ID. This means that the answer to each point function on the dataset is either $1$ (if the ID is present in the dataset), or $0$ (if the ID is not present in the dataset). With overwhelming probability (depending on $\delta$) the output will be an empty dataset.\footnote{From a utility point of view this is still good, since the answer to each point function is either $0$ or $1$, and an empty dataset means that every point function had a count less than the threshold.} Thus, in order to get some non-trivial output from the algorithm, the dataset should not be too sparse. Hence, in principle, the algorithm can be used for any dataset, even those that contain identifying information. It is simply the case that such identifying information will result in the algorithm not reproducing those rows in the output (by virtue of being sparse). In order to ensure that the algorithm produces some non-trivial synthetic dataset, we therefore removed some of the information that would make the original dataset sparse and aggregated other information to improve density. We will give more details on this in Section~\ref{sec:app-on-opal}.

\subsection{Privacy and Utility of the SBH Algorithm}

The privacy of the SBH algorithm is established in the following theorem whose proof we reproduce from~\cite{bun-stable, salil-tut} for completeness.
\begin{theorem}
\label{the:sbh-priv}
The SBH algorithm is $(\epsilon, \delta)$-differentially private~\cite{bun-stable, salil-tut}.
\end{theorem}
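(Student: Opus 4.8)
The plan is to fix two neighbouring datasets $D \sim D'$, which by definition differ in a single row, and to track how this one substitution propagates to the output distribution. Writing $y$ for the point appearing in the substituted row of $D$ and $z$ for the point replacing it in $D'$, the point counts satisfy $q_y(D) = q_y(D') + 1$, $q_z(D') = q_z(D) + 1$, and $q_x(D) = q_x(D')$ for every other $x \in \domain$. Since the algorithm draws an independent Laplace sample for each point and the subsequent thresholding and rounding are applied coordinatewise, the law of the output $D_{\mathsf{out}}$ is a product measure over the points $x \in \domain$, with the $x$-th factor being the distribution of $a_x$. I would exploit this product structure: in the likelihood ratio $\mathbb{P}(\mathcal{M}(D) = o)/\mathbb{P}(\mathcal{M}(D') = o)$ for a fixed output histogram $o = (o_x)_x$, every factor with $x \notin \{y, z\}$ cancels, reducing the entire analysis to the two changed points.

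First I would treat an \emph{interior} point, i.e.\ one with $q_x(D) > 0$ and $q_x(D') > 0$ simultaneously. Here both executions add $\text{Lap}(2/\epsilon)$ noise to counts that differ by exactly $1$, so the pointwise ratio of the two Laplace densities is at most $e^{\epsilon/2}$ by the triangle inequality. The threshold-and-round step is a fixed, data-independent deterministic map of the noisy value, so by the post-processing principle (Theorem~\ref{the:post-proc}) this $e^{\epsilon/2}$ bound survives for the distribution of $a_x$ itself, at every value. As at most two points change, each contributing at most $e^{\epsilon/2}$, the ``magnitude'' part of the ratio is controlled by $e^{\epsilon/2} \cdot e^{\epsilon/2} = e^{\epsilon}$; this is exactly why the noise scale carries the factor $2$.

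The crux, and the step I expect to be the main obstacle, is a \emph{boundary} point present in one dataset but absent from the other, where one factor is deterministic ($a_x = 0$ with probability $1$) and cannot be matched multiplicatively. If $q_y(D) = 1$ but $q_y(D') = 0$, then any output with $o_y \ge 1$ has probability $0$ under $\mathcal{M}(D')$, so no multiplicative $e^\epsilon$ factor can bound it; this is precisely the mass that must be absorbed into the additive $\delta$. I would bound it using the threshold $\tau = 2\ln(2/\delta)/\epsilon + 1$: the point $y$ survives into the output only when $1 + \text{Lap}(2/\epsilon) \ge \tau$, i.e.\ when $\text{Lap}(2/\epsilon) \ge 2\ln(2/\delta)/\epsilon$, and feeding $t = \ln(2/\delta)$ and $b = 2/\epsilon$ into the tail bound~\eqref{eq:lap-bound} shows this happens with probability $\tfrac{1}{2} e^{-\ln(2/\delta)} = \delta/4 \le \delta$. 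Thus the threshold is calibrated precisely so that the unmatchable event has probability at most $\delta$.

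Finally I would assemble the pieces. Let $\mathsf{B}$ be the event that a point lost in passing from $D$ to $D'$ crosses the threshold in $\mathcal{M}(D)$; only $y$'s count can decrease, so the previous step gives $\mathbb{P}(\mathcal{M}(D) \in \mathsf{B}) \le \delta$. For any output outside $\mathsf{B}$, each boundary factor is either at most $1$ (a lost point contributing $a=0$), or forces zero probability under $D$ (a gained point with $o_z \ge 1$), or equals $1/(1-\delta/4)$ (a gained point with $o_z = 0$), while the remaining factors are each within $e^{\epsilon/2}$. Splitting $S = (S \cap \mathsf{B}) \cup (S \setminus \mathsf{B})$ then yields $\mathbb{P}(\mathcal{M}(D) \in S) \le e^{\epsilon}\mathbb{P}(\mathcal{M}(D') \in S) + \delta$; equivalently one may phrase this split through the events characterisation of $(\epsilon,\delta)$-indistinguishability stated earlier. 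The one delicate bookkeeping point is the residual factor $1/(1-\delta/4)$ from a gained boundary point: since $\delta$ is negligible while $\epsilon$ is a fixed constant, $1/(1-\delta/4) \le e^{\epsilon/2}$, so it is safely absorbed into $e^{\epsilon}$, and I would account for it explicitly rather than sweeping it under the rug.
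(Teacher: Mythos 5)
Your proof is correct and follows the same essential decomposition as the paper's: reduce to the vector of answers $a_x$, observe that only the two changed points matter, bound the ``both counts positive'' case by an $e^{\epsilon/2}$ Laplace density ratio per point, and bound the ``count drops to zero'' case by the Laplace tail bound at the threshold with $t=\ln(2/\delta)$ (you even get the tighter $\delta/4$ where the paper settles for $\delta/2$). The one genuine difference is in how the pieces are assembled: the paper proves $(\epsilon/2,0)$- or $(0,\delta/2)$-indistinguishability coordinatewise and invokes composition, whereas you multiply likelihood ratios directly and split $S$ on the bad event $\mathsf{B}$. Your explicit route surfaces a residual factor $1/(1-\delta/4)$ from the gained boundary point that the paper's argument never has to name --- the coordinatewise $(0,\delta/2)$-indistinguishability of $(a_z(D),a_z(D'))$ absorbs it into the additive term for free, with no assumption relating $\epsilon$ and $\delta$. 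Your absorption of that factor into $e^{\epsilon/2}$ requires $-\ln(1-\delta/4)\le\epsilon/2$, which holds under the paper's standing assumptions ($\delta$ negligible, $\epsilon$ a fixed constant) and which you correctly flag; alternatively you could fold it into the additive $\delta$ and avoid the side condition entirely. Both arguments are sound; the paper's is more modular, yours makes the bookkeeping (and the exact constants) visible.
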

\begin{proof}
Notice that it suffices to argue privacy for the answers $a_x$ to the queries $q_x$, since the output dataset $D_{\mathsf{out}}$ can be constructed through the queries $q_x$ and their answers $a_x$ via post-processing. Consider two neighbouring datasets $D$ and $D'$, where $D'$ is obtained from $D$ by replacing row $x$ with row $x'$. Then the only point queries whose answers differ in the two datasets are $q_x$ and $q_{x'}$. Since each point query $q_y$ is independent of any other point query, we can look at the two aforementioned queries separately. After proving privacy for each, we can use composition to argue privacy for both together. 

Consider the answers $a_x(D)$ and $a_x(D')$ to the query $q_x$ on the datasets $D$ and $D'$, respectively. Since the row $x$ is in $D$, we know that $q_x(D) > 0$. If it is also true that $q_x(D') > 0$ then $a_x(D)$ and $a_x(D')$ are $(\epsilon/2, 0)$-indistinguishable due to the Laplace mechanism, where the subsequent steps 6 and 7 maintain $(\epsilon/2, 0)$-indistinguishability by the closure under post-processing property of differential privacy. If on the other hand we have $q_x(D') = 0$, then $a_x(D')$ is always $0$. But $q_x(D) = 1$, since $D$ and $D'$ are neighbouring datasets that differ in only one row, i.e., $x$, and therefore agree on all other rows. Now, we want to show that the probability that $a_x(D) \neq 0$ is bounded by $\delta/2$. We call the event $a_x(D) \neq 0$ as the \emph{bad event}. This is only possible if
\begin{align*}
q_x(D) + \text{Lap}\left(\frac{2}{\epsilon}\right) &\ge \frac{2}{\epsilon}  \ln\left( \frac{2}{\delta} \right) + 1 \\
\Rightarrow  1 +  \text{Lap}\left(\frac{2}{\epsilon}\right) &\ge \frac{2}{\epsilon}  \ln\left( \frac{2}{\delta} \right) + 1 \\
\Rightarrow   \text{Lap}\left(\frac{2}{\epsilon}\right) &\ge \frac{2}{\epsilon}  \ln\left( \frac{2}{\delta} \right).
\end{align*}
Therefore, by setting $b = \frac{2}{\epsilon}$ and $t =  \ln\left( \frac{2}{\delta} \right)$ in Eq.~\ref{eq:lap-bound}, we get
 \begin{align*}
 \mathbb{P}( a_x(D) \neq 0 ) &= \mathbb{P} \left( \text{Lap}\left(\frac{2}{\epsilon}\right) \ge \frac{2}{\epsilon}  \ln\left( \frac{2}{\delta} \right) \right) \\
&\le \exp\left( - \ln\left( \frac{2}{\delta} \right) \right) \\
& = \frac{\delta}{2}
\end{align*}
Thus the answers $a_x(D)$ and $a_x(D')$ are $(0, \delta/2)$-indistinguishable. Together with the previous case, by composition, answers $a_x(D)$ and $a_x(D')$ are $(\epsilon/2, \delta/2)$-indistinguishable. By symmetry, the same holds for the answers $a_{x'}(D)$ and $a_{x'}(D')$. For all other point functions the answers are identically distributed in the two datasets. Therefore, by composition all answers are $(\epsilon, \delta)$-indistinguishable. The rounding step in Step 9 is simple post-processing. Hence, the mechanism is $(\epsilon, \delta)$-differentially private.
\end{proof}
The following theorem states the utility of the SBH algorithm.
\begin{theorem}
All point queries on the synthetic dataset $D_{\mathsf{out}}$ have error at most $O \left( \frac{\ln \frac{2}{\delta}}{\epsilon} \right)$.
\end{theorem}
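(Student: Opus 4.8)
The plan is to bound, for every point $y \in \domain$, the error $|q_y(D) - q_y(D_{\mathsf{out}})|$, and to show it is $O\!\left(\frac{\ln(2/\delta)}{\epsilon}\right)$ simultaneously for all $y$ with high probability. The first observation is that, by construction, $D_{\mathsf{out}}$ contains exactly $a_y$ copies of each point $y$, so $q_y(D_{\mathsf{out}}) = a_y$ and the error of the query $q_y$ is precisely $|q_y(D) - a_y|$. Points $y$ with $q_y(D) = 0$ are handled immediately: the algorithm sets $a_y = 0$, giving zero error. Likewise, only points $y$ with $q_y(D) > 0$ can contribute noise, and there are at most $n$ such points, since $D$ has $n$ rows. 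This bounds the number of noisy answers we must control by $n$, independently of $|\domain|$.

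For a point $y$ with $q_y(D) > 0$, write $Z_y \sim \text{Lap}(2/\epsilon)$ for its Laplace noise and $T = \frac{2}{\epsilon}\ln(2/\delta) + 1$ for the threshold of Step 6. I would split into the two cases produced by the algorithm. If the noisy count survives the threshold, then $a_y = \text{round}(q_y(D) + Z_y)$; since $q_y(D)$ is an integer, rounding commutes with the integer shift and the error is $|\text{round}(Z_y)| \le |Z_y| + \tfrac12$. If instead the noisy count falls below the threshold, then $a_y = 0$ and the error equals $q_y(D) = (q_y(D) + Z_y) - Z_y < T - Z_y \le T + |Z_y|$. Hence in either case the error is at most $T + |Z_y| + \tfrac12$, a deterministic term of size $O(\ln(2/\delta)/\epsilon)$ plus the magnitude of the Laplace noise.

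It then remains to control $\max_y |Z_y|$ over the (at most $n$) points with positive count. Using the tail bound in Eq.~\ref{eq:lap-bound} with $b = 2/\epsilon$, by symmetry of the Laplace distribution the two-sided tail satisfies $\mathbb{P}(|Z_y| \ge tb) \le \exp(-t)$, and a union bound over the at most $n$ relevant points gives $\mathbb{P}(\exists y : |Z_y| \ge tb) \le n\exp(-t)$. Choosing $t = \ln(n/\beta)$ makes this at most $\beta$, so with probability at least $1 - \beta$ every $|Z_y| \le \frac{2}{\epsilon}\ln(n/\beta)$. Combining with the previous paragraph, with probability at least $1 - \beta$ every point query has error at most $T + \frac{2}{\epsilon}\ln(n/\beta) + \tfrac12 = O\!\left(\frac{\ln(2/\delta) + \ln(n/\beta)}{\epsilon}\right)$.

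The main obstacle — and the step that needs care — is reconciling the $\ln n$ factor coming from the union bound with the claimed bound, which depends only on $\delta$. The resolution is that $\delta$ is required to be negligible in $n$, so $\ln(1/\delta)$ dominates $\ln n$: for every constant $c$ and all large $n$ we have $\delta < n^{-c}$, hence $\ln(1/\delta) > c\ln n$, i.e. $\ln n = O(\ln(1/\delta))$. Treating the confidence parameter $\beta$ as a constant, $\ln(n/\beta) = O(\ln(2/\delta))$, and the bound collapses to $O\!\left(\frac{\ln(2/\delta)}{\epsilon}\right)$, as claimed. (Alternatively, if the statement is read as a per-query guarantee, the union bound is unnecessary: $|Z_y| = O(1/\epsilon)$ already holds with constant probability, and this noise term is dominated by the threshold $T$.)
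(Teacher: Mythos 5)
Your proof is correct, and it is essentially the standard argument from Vadhan's tutorial that the paper merely cites rather than reproduces: bound the error by the threshold plus the noise magnitude in both branches of the algorithm, union-bound the Laplace tails over the at most $n$ points with positive count, and absorb the resulting $\ln(n/\beta)$ into $\ln(2/\delta)$ using the requirement that $\delta$ be negligible in $n$. You also correctly flag the one subtlety the bare theorem statement glosses over --- that the guarantee is a ``with probability at least $1-\beta$'' statement, since Laplace noise is unbounded --- so there is nothing to fix.
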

\begin{proof}
See~\cite[\S 3.3, p. 24]{salil-tut}.
\end{proof}

\subsection{Why the SBH Algorithm?}
When applied to the Opal dataset, the SBH algorithm had several advantages that we highlight in the following. 

\subsubsection{Computational Efficiency}
The SBH algorithm has the advantage that the run time is polynomial in $n$, i.e., the number of rows in the dataset, as opposed to polynomial in the size of the domain $\domain$, which in turn is exponentially large in the number of attributes. In fact, the domain size $| \domain |$ can become substantially large with only a few attributes of high cardinality. For instance, consider a domain $\domain$ having 4 attributes, each with cardinality $1000$. Then the size of the domain, i.e., $| \domain |$ is already $10^{12}$. To see why the SBH algorithm has run time polynomial in $n$, note that we need not run through each point $x \in \domain$ mentioned in step 2 of the algorithm. Instead, we need only run through points that exist in the dataset (since the point functions of all the points not in the dataset $D$ have by definition answers of $0$). 

However, care must be taken with this modification of the algorithm for reasons of privacy. More specifically, we need to ensure that the rows of the dataset $D$ are either randomly shuffled or sorted in some prescribed order independent of the data, e.g., lexicographic order. This can be done either before or after the application of the (modified) algorithm. To see why order is important, consider two neighbouring datasets $D$ and $D'$ that only differ in the first row. In dataset $D$, the first row is $x$, and in dataset $D'$ the first row is $x'$. The rest of the rows in the two datasets are identical (both in order and value). Now assume that $q_x(D)$ and $q_{x'}(D')$ are both non-zero and significantly greater than the threshold $\frac{2}{\epsilon}  \ln\left( \frac{2}{\delta} \right)$ of the SBH algorithm. Then, with overwhelming probability the first few rows in the output dataset from $D$ will be equal to the point $x$ and the first few rows in the output dataset from $D'$ will be equal to the point $x'$. Thus, the two outputs are not $(\epsilon, \delta)$-indistinguishable for any reasonable values of $\epsilon$ and $\delta$. It is easy to see how this is avoided when the original dataset $D$ is sorted according to some fixed order independent of the dataset. Or, equivalently, via a random shuffle. 

\subsubsection{Introducing New Points}
Another feature of the SBH algorithm is that it does not introduce new points in its output. That is, it only outputs those points that exist in the original dataset $D$, and any points not in $D$, but in the domain, are not present in the output. This is, of course, done with the drawback that some points of $D$ are stripped from the output. In essence, points with low or zero counts are both mapped to zero counts. It was one of the desired characteristics of the synthetic Opal dataset that no new points be added. Without this requirement, a straightforward mechanism to output the histogram of some dataset $D$, i.e., answering all point functions, is to apply Laplace noise of magnitude $1/\epsilon$ to the answer $a_x(D)$ of each point $x \in \domain$ (even points that are not present in $D$). Of course all possible points in the (unrestricted) domain might include points that have no meaning in the real world. For instance, a point might correspond to a trip whose route is not in service at that particular time. In a post-processing step, these anomalies can potentially be ruled out. But doing so in general is not easy, as each row needs to be checked for consistency. In the absence of rules to automatically check for consistency, this becomes a prohibitive task. Nevertheless, the algorithm could be used if a dictionary of all possible trips (tap locations and time combinations) is provided.\footnote{One may also simply leave these ``anomalies'' in tact as a natural by product of the synthetic data generation algorithm. But this is not desirable from a usability perspective. An analyst using the dataset does not know whether a given trip is valid or not without further knowledge of all possible valid trips.} However, such a modification now relies on the domain size, and depending on the size, this algorithm can become prohibitive. 


\section{Application to the Opal Dataset}
\label{sec:app-on-opal}
In this section we describe how we applied the SBH algorithm on the Opal dataset and our choice of privacy parameters.

\subsection{Producing the Output}
\label{sub:output}
Looking at the SBH algorithm, it is clear that as long as the given dataset is sufficiently dense, the output will contain many points from the original dataset. In order to make the dataset more dense we performed a number of pre-processing steps. Note that these steps were taken to increase the density of the dataset(s) to control the impact on utility only. While steps similar to these are used in other settings in an attempt to provide privacy, these do not need to be interpreted as improving privacy in our case, as privacy is automatically guaranteed by the process being differentially private. As discussed below, we actually partitioned the dataset into disjoint partitions. After that we applied the SBH algorithm on several combination of columns (attributes) from each partition to release several differentially private synthetic datasets. The following is the summary of the steps taken. 
\begin{enumerate}
	\item We removed any unique identifiers. These by definition make the dataset sparse.
	\item We binned the tap-on and off times to within a 15 minute window.
	\item The number of possible tap-on and tap-off locations for buses (i.e., bus stops) lead to sparseness. We therefore collapsed them into postcodes.
	\item We partitioned the dataset into $4 \times 14 = 56$ disjoint datasets. This was done by first splitting the dataset by the four transport modes and then separating the datasets belonging to different days.
	\item For each partitioned dataset, we only retained tap-on and tap-off times and locations (the date and mode of transport is now a constant).
	\item Fix a partitioned dataset $D_i$, where $i$ indicates the partition number. We further partitioned the dataset into 6 different datasets according to the combination of columns. These 6 combinations were: (1) tap-on times, (2) tap-on locations, (3) tap-off times, (4) tap-off locations, (5) tap-on times and locations, and (6) tap-off times and locations. We label these datasets $D_{i, j}$, where $1 \le j \le 6$. Note that these new partitions are not disjoint.
	\item We ran the SBH algorithm on each dataset $D_{i, j}$ to obtain the corresponding synthetic dataset, where $1 \le i \le 56$ and $1 \le j \le 6$. The parameters used for the SBH algorithm for the dataset $D_{\cdot, j}$ were $\epsilon_j$ and $\delta_j$, subject to the condition that $\sum_{j = 1}^{6} \epsilon_j = \epsilon$ and $\sum_{j = 1}^{6} \delta_j = \delta$. Note that implicitly we are saying that the same parameter pair $(\epsilon_j, \delta_j)$ is used for datasets $D_{i, j}$ for $1 \le i \le 56$. 
\end{enumerate}

We now show that the whole process is differentially private.
\begin{theorem}
The released Opal datasets are $(\epsilon, \delta)$-differentially private.
\end{theorem}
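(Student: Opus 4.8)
The plan is to decompose the end-to-end pipeline into the two composition regimes already established, applying each at the level where it is actually valid, and to dispose of the deterministic pre-processing separately. First I would observe that steps 1--5 (stripping identifiers, binning tap times, collapsing bus stops to postcodes, and projecting away the now-constant date and mode columns) are all \emph{deterministic} transformations that act row by row. Consequently, if $D \sim D'$ are neighbouring raw datasets, then their transformed versions are either neighbours or identical, since changing a single trip changes at most one transformed row. Thus it suffices to prove the guarantee with respect to the coarsened neighbour relation; the pre-processing cannot weaken it (an input-side counterpart to the post-processing immunity of Theorem~\ref{the:post-proc}).

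The crux is then to handle the two partitioning operations with the correct composition theorem each. The outer partition into the $56$ datasets $D_i$ is by (transport mode, date), and since every trip has exactly one mode and one date, these $D_i$ are exactly the restrictions $D \cap \domain_i$ to \emph{disjoint} subsets $\domain_i \subseteq \domain$. Hence a single-row change to the raw data lands in exactly one partition and perturbs the released output of that partition only. This is precisely the hypothesis of parallel composition (Theorem~\ref{the:par-comp}): provided each per-partition release $\mathcal{M}_i$ is $(\epsilon,\delta)$-differentially private with a \emph{common} parameter pair — which holds here because the same $(\epsilon_j,\delta_j)$ are used across all $i$ — the full sequence $(\mathcal{M}_1(D_1),\ldots,\mathcal{M}_{56}(D_{56}))$ is $(\epsilon,\delta)$-differentially private, with \emph{no} blow-up by a factor of $56$.

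It remains to show that each $\mathcal{M}_i$ — the release of the six synthetic datasets obtained from $D_{i,1},\ldots,D_{i,6}$ — is $(\epsilon,\delta)$-differentially private. Here the six column combinations are emphatically \emph{not} disjoint: they overlap (for instance, combination $5$ comprises the columns of combinations $1$ and $2$), so a single trip contributes to all six projections $D_{i,j}$. A one-row change to $D_i$ can therefore perturb all six inputs simultaneously, parallel composition is unavailable, and I must instead use sequential composition. Each $\mathcal{M}_{i,j}$ is the SBH algorithm run on the projection $D_{i,j}$ with parameters $(\epsilon_j,\delta_j)$; since projection onto a fixed column set maps neighbours to neighbours (or to equal datasets), Theorem~\ref{the:sbh-priv} gives that $\mathcal{M}_{i,j}$ is $(\epsilon_j,\delta_j)$-differentially private as a function of $D_i$. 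Applying basic (sequential) composition across $j=1,\ldots,6$ then yields that $\mathcal{M}_i$ is $\bigl(\sum_{j}\epsilon_j,\ \sum_{j}\delta_j\bigr)$-differentially private, which equals $(\epsilon,\delta)$ by the budget-splitting constraints $\sum_{j}\epsilon_j=\epsilon$ and $\sum_{j}\delta_j=\delta$. Feeding this back into the parallel-composition step completes the argument.

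The main obstacle I anticipate is bookkeeping the two composition regimes correctly and, in particular, justifying the use of the \emph{heterogeneous} form of basic composition: Theorem~\ref{the:basic-comp} is stated only for a common parameter pair, whereas here $(\epsilon_j,\delta_j)$ varies with $j$. I would note that the heterogeneous statement — the composition of $(\epsilon_j,\delta_j)$-differentially private mechanisms is $\bigl(\sum_{j}\epsilon_j,\sum_{j}\delta_j\bigr)$-differentially private — follows by exactly the same argument as Theorem~\ref{the:basic-comp}. The only other point needing care is confirming that the column projections preserve the neighbour relation, which is immediate since a row-wise projection changes at most one output row per changed input row.
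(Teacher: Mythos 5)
Your proposal is correct and follows essentially the same route as the paper: sequential composition over the six overlapping column projections within each (mode, date) partition, then parallel composition across the $56$ disjoint partitions, with the deterministic pre-processing dismissed as fixing the domain. Your added care about the heterogeneous form of basic composition and about row-wise projections preserving the neighbour relation only makes explicit what the paper leaves implicit.
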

\begin{proof}
First note that steps 1, 2, 3 and 5 are essentially fixing the domain $\domain$ and therefore apply equally to all datasets $D$ from $\domain^n$. Therefore, these steps have no bearing on differential privacy. For the remaining steps, first fix an $i$. Then we see that all outputs from the datasets $D_{i, j}$, $1 \le j \le 6$, are altogether $(\sum_{j = 1}^{6} \epsilon_{j}, \sum_{j = 1}^{6} \delta_{j}) = (\epsilon, \delta)$-differentially private due to the sequential composition theorem (see Theorem~\ref{the:basic-comp}). Now, all outputs $D_{i, j}$, for $1 \le i \le 56$ and $1 \le j \le 6$ are $(\epsilon, \delta)$-differentially private according to the parallel composition theorem (See Theorem~\ref{the:par-comp}), since each set $D_{i, j}$, $1 \le j \le 6$, is $(\epsilon, \delta)$-differentially private. This completes the proof.
\end{proof}

\begin{remark}
Recall our discussion on trip privacy in Remark~\ref{rem:trip}. We note that since any identifying information has already been removed in the output datasets, and since the datasets have been decoupled into different datasets based on transport modes and dates, there is no obvious link between the trips. Therefore, we believe that the somewhat weaker notion of trip privacy, as opposed to all trips from an individual, is reasonable in this case. 
\end{remark}

\subsection{Choice of Parameters}
\label{sub:params}
Our discussion in this section applies to a generic disjoint partition $i$, where $1 \le i \le 56$. We therefore omit the subscript $i$ and will simply refer to the $6$ further partitions of the dataset $D_i$, as $D_1, \ldots, D_6$. Recall that according to our labelling, datasets $D_1, \ldots, D_4$ correspond to the one-way marginal counts, and datasets $D_5$ and $D_6$ correspond to the two-way marginal counts. We set the values of $\epsilon_j$'s as follows
\begin{align*}
\epsilon_1 = \epsilon_2 = \epsilon_3 =  \epsilon_4 &= 1,\\
 \epsilon_5 = \epsilon_6 &= 2.
\end{align*}
The decision of assigning $\epsilon_j$'s is one of balancing privacy against utility. The higher privacy budget assigned to the two-way marginals ensured that more points were present in the output datasets.
This means that the overall value of $\epsilon$ was $8$ (by basic composition). This value of $\epsilon$ is admittedly higher than what is generally considered in the differential privacy literature.\footnote{Although, not by far! For instance, the experimental evaluation of the DualQuery algorithm from Gaboardi et al. uses an $\epsilon$ up to $5$~\cite[\S 6]{dual-query}.} However, due to the removal of sensitive or identifying information, we believe that this is a reasonably trade-off between privacy and utility. 

For $\delta$, recall that in theory $\delta$ is supposed to be smaller than any inverse polynomial power of $n$, where $n$ is the size of the dataset $D$. However, in practice, we need to instantiate $\delta$ with some value. Once we instantiate $\delta$, by definition $\delta$ is a constant, and is no longer equivalent to its theoretical definition. Therefore, in practice we need to ensure that $\delta$ is a small constant (relative to the size of the dataset). By looking at the SBH algorithm, we see that we need to set $\delta$ such that the probability of the \emph{bad event}\footnote{See the second paragraph in the proof of Theorem~\ref{the:sbh-priv} for the definition of the bad event.} is small. This probability is given by $\delta_j/2$. We fixed $\delta_j = 1/8000000 \approx 2^{-23}$, making this probability approximately $2^{-24}$. The value of $\delta_j \approx 2^{-23}$ implies that the overall value of $\delta$ (by basic composition) is less than $2^{-20}$ or $10^{-6}$. This value of $\delta$ has been used before in the literature~\cite[\S 6]{dual-query}, and is slightly far from another value of $2^{-30}$ recommended in~\cite[\S 8, p. 11]{sec-sample}.

\section{Conclusion}
\label{sec:conc}
Privacy-preserving high dimensional data release is a difficult problem. Research in differential privacy has resulted in many proposed algorithms that seek to generate synthetic versions of the data such that they give approximately correct answers to a specific set of queries. However, in many cases such algorithms take time proportional to the data universe, i.e., the size of the data domain, which in turn is exponential in the number of attributes present in the dataset. Thus, in practice many of these algorithms are not viable for data release.\footnote{See~\cite{salil-tut} for a detailed comparison of differentially private algorithms from a computational point of view.} Furthermore, in the case of open data the usability factor is paramount. A synthetic dataset may contain rows that have no meaning in the real world (e.g., a trip via a route that does not exist.). The purpose of open data is to release the data for any one to use, experts and novices alike. For such users, presence of such anomalous rows (which are in fact artefacts of synthetic data) is not desirable. The algorithm to release the Opal dataset was chosen with these two main considerations in mind. It is possible that in the future better algorithms can be identified or developed that allow richer insights into transport data. We believe this release will guide efforts in that direction.

\bibliographystyle{unsrt}
\bibliography{opal-tech-report}

\end{document}